\documentclass{eptcs} 

\usepackage{amssymb}
\usepackage{url}
\usepackage{times}
\usepackage{epsfig}
\usepackage{breakurl}
\usepackage{color}

\makeatletter

\def\hmo{{\cal O}}

\def\hml{{\it HML}}

\def\eqhmo{\sim_{\hmo}}

\def\hmeq{\sim_{\cal O}}

\def\enc{\partial}

\def\transa{\stackrel{a}{\rightarrow}}

\def\transa{\stackrel{a}{\rightarrow}}
\def\transb{\stackrel{b}{\rightarrow}}

\def\iff{\Leftrightarrow}
\def\iffdef{\stackrel{\rm def}\Leftrightarrow}

\def\implies{\Rightarrow}
\def\true{{\sf T}}
\def\false{{\sf F}}

\newtheorem{defi}{Definition}
\newtheorem{theo}{Theorem}
\newtheorem{prop}{Proposition}
\newtheorem{lemm}{Lemma}
\newtheorem{coro}{Corollary}

\newenvironment{theorem}{\begin{theo} \rm }{\end{theo}}

\newenvironment{lemma}{\begin{lemm} \rm }{\end{lemm}}

\newenvironment{proof}{\begin{trivlist} \item[\hspace{\labelsep}\bf Proof:]}{\hfill $\Box$ \end{trivlist}}

\def\foralli{\forall_{i \in I}}

\newcommand{\diam}[1]{\langle#1\rangle}
\newcommand{\lmerge}[2]{#1 ||_{\_} #2}

\title{Congruence from the Operator's Point of View:\\
Compositionality Requirements on Process Semantics}
\author{Maciej Gazda \& Wan Fokkink
\institute{Vrije Universiteit\\
Department of Computer Science\\
De Boelelaan 1081a, 1081 HV Amsterdam, Netherlands}
\email{m.w.gazda@student.vu.nl,~wanf@cs.vu.nl}
}

\begin{document}
\maketitle
\begin{abstract}
One of the basic sanity properties of a behavioural semantics is that it constitutes
a congruence with respect to standard process operators. This 
issue has been traditionally addressed by the development of rule formats 
for transition system specifications that define process algebras. In 
this paper we suggest a novel, orthogonal approach. Namely, we focus on a 
number of process operators, and for each of them attempt to find the 
widest possible class of congruences. To this end, we impose restrictions 
on sublanguages of Hennessy-Milner logic, so that a semantics whose 
modal characterization satisfies a given criterion is guaranteed to be 
a congruence with respect to the operator in question. We investigate action 
prefix, alternative composition, two restriction operators, and parallel composition.
\end{abstract}

\section{Introduction}

Congruence is one of the most important properties of a behavioural semantics. The reason is that the fundamental issue in process algebra - providing sound and complete axiomatisations for collections of process operators - requires that these operators are compositional. Only then we can use equational logic priciples and provide sound axioms.

There is a large amount of research to find ways of ensuring the congruence property. The basic methodology is to impose restrictions on operator definitions; there is a notion of a rule format for transition system specifications which provide operational semantics for process algebras. If a process operator is defined with rules that fit within a format, then the semantics in question is a congruence with respect to this operator. Examples include the panth format for bisimulation semantics \cite{Ver95} and formats designed specifically for several decorated trace semantics \cite{BlFoGl04}. The focus here is on \textit{semantics}; rule formats are most often defined with one particular process semantics in mind. Interestingly, in \cite{BlFoGl04}, the modal characterization of a process semantics is taken as starting point to derive the syntactic constraints of the congruence format for this semantics. A modal characterization of a semantics is a sublanguage of Hennessy-Milner logic such that two processes are semantically equivalent if and only if they satisfy exactly the same formulas in the modal characterization of the semantics. For almost all process semantics in van Glabbeek's spectrum \cite{Gla01} there is a corresponding modal characterization.
 
In this paper, we attempt to look at the compositionality issue from an \textit{operator's} point of view. For a number of basic process operators, we determine conditions that a process semantics should satisfy in order to be congruence with respect to such an operator. To be more precise, given a process operator, we develop syntactic constraints on modal characterizations; if the modal characterization of a process semantics satisfies these constraints, then the process operator is guaranteed to be compositional with respect to this semantics. So instead of going from a process semantics to a class of transition system specifications for which that semantics is a congruence, we go from the transition rules of a process operator to a class of process semantics for which this operator is compositional. This approach gives us an orthogonal view on compositionality, and provides further insight into connections between process algebra and modal logic.

\section{Preliminaries}

We work in the usual setting of labelled transition systems (LTSs), which consist of a set $S$ of states $p$ (also called processes), a set ${\it Act}$
of actions $a$, and a set of transitions $p \transa p'$.

\subsection{Hennessy-Milner logic}
\label{sec:hml}

Hennessy-Milner logic ($\hml$) \cite{HeMi85} is a modal logic for specifying properties of states in an LTS.
There exist different versions of $\hml$ \cite{HeMi85,Gla01,BlRiVe01}. The choice of syntax is important here,
even if two logics have the same expressivity; compositionality requirements established for some version of $\hml$
(e.g.\ with diamond, conjunction and negation only) may become insufficient when we add other operators
(e.g.\ box), because these extra operators may require syntactic requirements of their own.
Our point of departure is the infinitary $\hml$ variant without box and disjunction. The $\hml$ syntax is therefore as follows:
\begin{center}
$
\varphi ~~::=~~ \true ~\mid~ \bigwedge_{i\in I}\varphi_i ~\mid~ \diam{a}\varphi ~\mid~ \neg\varphi
$
\end{center}
where $I$ is an arbitrary index set, and $a$ ranges over the set {\it Act} of actions.
Furthermore, we use $\false$ as an abbreviation for $\neg\true$. We introduce some additional notations, based on the standard notion of context. A context, notation $C[]$, is a $\hml$ formula with one occurrence of $[]$. A \textit{multicontext $C[]_{i \in I}$} is a $\hml$ formula containing one or more $[]$ symbols, indexed by the elements from $I$. For a (multi)context $C[\varphi_i]_{i \in I}$, a formula is obtained by replacing the $[]_i$ symbols with formulas $\varphi_i$.
Finally, we introduce an \textit{$n$-level context}, which means that the context symbol has $n$ diamond operators above it. It is defined inductively as follows:
 \begin{itemize}
 \item $[]$ is a ${\bf 0}$-level context;
 \item if $C_{n}[]$ is an $n$-level context, then $\neg C_{n}[]$ and $C_{n}[] \wedge \bigwedge_{i \in I} \varphi_i$ are $n$-level contexts;
 \item if $C_{n}[]$ is an $n$-level context, then $\diam{a}C_{n}[]$ is an ($n+1$)-level context.
 \end{itemize}
An example of a ${\bf 0}$-level context is $\diam{a}\diam{b} \true \wedge \neg []$, while $\diam{a} (\diam{a}\diam{b} \true \wedge [])$ is a $1$-level context.

A sublanguage $\hmo$ of $\hml$ gives rise to a process equivalence by identifying those processes which satisfy exactly the same formulas from $\hmo$:
\begin{center}
$p \eqhmo q \,\iffdef\, \forall_{\varphi \in \hmo}: (p \models \varphi \iff q \models \varphi)$.
\end{center}
We call $\hmo$ a  modal characterization of $\hmeq$. Below, examples of modal characterizations of standard process equivalences from the literature are given (see \cite{Gla01}):
\begin{itemize}
	\item \textit{trace observations}:\\
	 $\hmo_{T} \hspace{5.0 pt} \varphi ::= \true \hspace{3.0 pt}|\hspace{3.0 pt} \diam{a} \varphi '~(\varphi' \in \hmo_{T})$
	\item \textit{completed trace obervations}:\\
	 $\hmo_{CT} \hspace{5.0 pt} \varphi ::= \true \hspace{3.0 pt}|\hspace{3.0 pt} \diam{a} \varphi '~(\varphi' \in \hmo_{CT})\hspace{3.0 pt}|\hspace{3.0 pt} \bigwedge_{a \in {\it Act}} \neg \diam{a} \true$
	\item \textit{failures observations}:\\
	 $\hmo_{F} \hspace{5.0 pt} \varphi ::= \true \hspace{3.0 pt}|\hspace{3.0 pt} \diam{a} \varphi '~(\varphi' \in \hmo_{F})\hspace{3.0 pt}|\hspace{3.0 pt} \bigwedge_{i \in I} \neg \diam{a_i} \true$
  \item \textit{readiness observations}:\\	 
	$\hmo_{R} \hspace{5.0 pt} \varphi ::= \true \hspace{3.0 pt}|\hspace{3.0 pt} \diam{a} \varphi '~(\varphi' \in \hmo_{R})\hspace{3.0 pt}|\hspace{3.0 pt} \bigwedge_{i \in I} \neg \diam{a_i} \true \wedge \bigwedge_{j \in J} \diam{b_j} \true$
	\item \textit{failure trace observations}:\\	 
	$\hmo_{FT} \hspace{5.0 pt} \varphi ::= \true \hspace{3.0 pt}|\hspace{3.0 pt} \diam{a} \varphi '~(\varphi' \in \hmo_{FT})\hspace{3.0 pt}|\hspace{3.0 pt} \bigwedge_{i \in I} \neg \diam{a_i} \true \wedge \varphi '~(\varphi ' \in \hmo_{FT})$
	\item \textit{ready trace observations}:\\	
	$\hmo_{RT} \hspace{5.0 pt} \varphi ::= \true \hspace{3.0 pt}|\hspace{3.0 pt} \diam{a} \varphi '~(\varphi' \in \hmo_{RT})\hspace{3.0 pt}|\hspace{3.0 pt} \bigwedge_{i \in I} \neg \diam{a_i} \true \wedge \bigwedge_{j \in J} \diam{b_j} \true \wedge \varphi '~(\varphi ' \in \hmo_{RT})$
	\item \textit{simulation observations}:\\	
	$\hmo_{1S} \hspace{5.0 pt} \varphi ::= \true \hspace{3.0 pt}|\hspace{3.0 pt} \diam{a} \varphi '~(\varphi' \in \hmo_{1S}) \hspace{3.0 pt}|\hspace{3.0 pt} \bigwedge_{i \in I} \varphi_{i}~(\varphi_{i} \in \hmo_{1S})$
	\item \textit{ready simulation observations}:\\	 
	$\hmo_{RS} \hspace{5.0 pt} \varphi ::= \true \hspace{3.0 pt}|\hspace{3.0 pt} \diam{a} \varphi '~(\varphi' \in \hmo_{RS}) \hspace{3.0 pt}|\hspace{3.0 pt} \neg \diam{a} \true \hspace{3.0 pt}|\hspace{3.0 pt} \bigwedge_{i \in I} \varphi_{i}~(\varphi_{i} \in \hmo_{RS})$
	\item \textit{$n$-nested simulation observations for $n \geq 2$}:\\	 
	$\hmo_{nS} \hspace{5.0 pt} \varphi ::= \true \hspace{3.0 pt}|\hspace{3.0 pt} \diam{a} \varphi '~(\varphi' \in \hmo_{nS}) \hspace{3.0 pt}|\hspace{3.0 pt} \bigwedge_{i \in I} \varphi_{i}~(\varphi_{i} \in \hmo_{nS}) \hspace{3.0 pt}|\hspace{3.0 pt} \neg \varphi '~(\varphi' \in \hmo_{(n-1)S})$
	\item \textit{ bisimulation observations}:\\
	$\hmo_{B} \hspace{5.0 pt} \varphi ::= \true \hspace{3.0 pt}|\hspace{3.0 pt} \diam{a} \varphi '~(\varphi' \in \hmo_{B}) \hspace{3.0 pt}|\hspace{3.0 pt} \bigwedge_{i \in I} \varphi_{i}~(\varphi_{i} \in \hmo_{B})\hspace{3.0 pt}|\hspace{3.0 pt} \neg \varphi '~(\varphi ' \in \hmo_{B}) $
	\end{itemize}

\vspace{2mm}
We write $\varphi\equiv\varphi'$ if $p\models\varphi\Leftrightarrow p\models\varphi'$ for any process $p$ in any LTS.
Given an $\hmo\subseteq\hml$, we write $\hmo^\equiv$ for the set of HML formulas $\varphi$ for which there
exists a $\varphi'\in\hmo$ with $\varphi\equiv\varphi'$.

\subsection{BCCSP}

Any LTS isomorphic with a finite tree can be described with the following process algebra BCCSP, consisting of three operators:
	
\begin{itemize}
\item a nullary process ${\bf 0}$ which does not have any behaviour;
\item action prefix $a.()$ for $a\in{\it Act}$: a unary operator which represents execution of a single action followed by the process given as the argument, defined by the transition rule
\begin{center}
\[ \frac{}{ax \transa x} \]
\end{center}
\item alternative composition ($+$), a nondeterministic choice between two processes, defined by the transition rules
\begin{center}
\[ \frac{x \transa x'}{x + y \transa x'} 
~~~~~~~~~~ \frac{y \transa y'}{x + y \transa y'}\]
\end{center}
\end{itemize}
	
In this paper we focus on several process operators from the literature, and try to establish which syntactic properties a modal language $\hmo \subseteq \hml$ should satisfy to guarantee that the induced equivalence is a congruence with respect to the given operator. That is, given a process operator $f$, we will search for a syntactic condition $C$ such that if $\hmo$ satisfies $C$, then $f$ is compositional with respect to $\eqhmo$.

\section{Basic operators}

\subsection {Alternative composition}

We start with alternative composition, which expresses a nondeterministic choice between two processes. We want to find a general property of a modal language that would guarantee congruence of the induced equivalence with respect to alternative composition. Our first observation is that the behaviour of an alternative composition $p_1+p_2$ \textit{after performing the first step} is completely determined by the behaviour of one of the components. For example, $p_1+p_2 \models \diam{a}\varphi$ if and only if either $p_1 \models \diam{a} \varphi$ or $p_2 \models \diam{a} \varphi$. The only potential problem can occur when there is a formula with a conjunction at level 0 (i.e., not in the scope of an action prefix). For instance, consider $ \hmo = \{ \diam{a} \true \wedge \diam{b} \true \} $.  We have $a{\bf 0} \hmeq {\bf 0}$ and $b{\bf 0} \hmeq {\bf 0}$, but $a{\bf 0} + b{\bf 0} \not\hmeq {\bf 0} + {\bf 0}$. As it turns out, it suffices to simply close the language on sub-conjunctions at level 0.
  
\begin{theorem} 
Let $\hmo \subseteq \hml$. If for any 0-level context $C_{0}[]$ and $\varphi_i\in \hml$ for $i\in I$,
\begin{center}
(AC)~~~~~$C_{0}[\bigwedge_{i \in I} \varphi_i]\in\hmo$ implies that $\foralli: (\varphi_i \in \hmo^\equiv)$,
\end{center}
then $\hmeq$ is a congruence with respect to alternative composition ($+$).
\end{theorem}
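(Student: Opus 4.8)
The goal is to show that $p_1 \eqhmo q_1$ and $p_2 \eqhmo q_2$ imply $p_1 + p_2 \eqhmo q_1 + q_2$, under hypothesis (AC). By symmetry it suffices to show that for every $\varphi \in \hmo$, if $p_1 + p_2 \models \varphi$ then $q_1 + q_2 \models \varphi$. The key structural fact, already hinted at in the text, is that the behaviour of $p_1 + p_2$ "after the first step" depends only on one summand: for any formula $\diam{a}\psi$ we have $p_1 + p_2 \models \diam{a}\psi$ iff $p_1 \models \diam{a}\psi$ or $p_2 \models \diam{a}\psi$, because the only transitions of $p_1+p_2$ are inherited from $p_1$ or from $p_2$, and after such a transition the target is a state of the summand, where the semantics of $\psi$ is evaluated intrinsically (no more reference to the sum). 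So the only place the sum is "visible" is at level $0$.

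I would first prove an auxiliary claim: \emph{for every $0$-level context $C_0[]$ and every HML formula $\psi$,}
\[
p_1 + p_2 \models C_0[\psi] \quad\iff\quad \bigl(p_1 + p_2 \models C_0[\false] \text{ with the occurrence of } \psi \text{ replaced}\bigr)\ \ldots
\]
more precisely, I would show by induction on the structure of the $0$-level context $C_0[]$ that the truth of $C_0[\psi]$ at $p_1+p_2$ is a fixed Boolean combination of the truths at $p_1+p_2$ of the "atomic" level-$0$ subformulas, where each such atom is either $\true$ or of the form $\diam{a}\chi$; and for each such atom $\diam{a}\chi$, by the inherited-transitions observation, $p_1+p_2 \models \diam{a}\chi \iff p_1 \models \diam{a}\chi \text{ or } p_2 \models \diam{a}\chi$. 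The base case and the $\neg$ case are immediate; the conjunction case $C_0[] = C_0'[] \wedge \bigwedge_{i\in I}\varphi_i$ is where the side formulas $\varphi_i$ appear, and these are themselves level-$0$ formulas to which the same analysis applies. The diamond clause does not arise for a $0$-level context except at its very leaves, which are exactly the atoms $\diam{a}\chi$.

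Now take $\varphi \in \hmo$ with $p_1 + p_2 \models \varphi$. Writing $\varphi$ as a $0$-level context over its maximal level-$0$ diamond subformulas, $\varphi = C_0[\diam{a_k}\chi_k]_{k\in K}$, the auxiliary claim expresses $p_1+p_2\models\varphi$ as a Boolean combination of the facts "$p_j \models \diam{a_k}\chi_k$" for $j\in\{1,2\}$, $k\in K$. For each $k$, the formula $\diam{a_k}\chi_k$ is obtained from $\varphi$ by substituting $\false$ for all the \emph{other} maximal level-$0$ diamond subformulas, i.e. it sits inside $\varphi$ after possibly being joined by conjunctions at level $0$; by hypothesis (AC), splitting off those level-$0$ conjunctions keeps us inside $\hmo^\equiv$, so $\diam{a_k}\chi_k \in \hmo^\equiv$. (Here I use that $\hmo^\equiv$-membership is all we need, since $\eqhmo$ coincides with $\eqhml$ restricted to $\hmo^\equiv$.) Hence $p_1 \eqhmo q_1$ gives $p_1 \models \diam{a_k}\chi_k \iff q_1 \models \diam{a_k}\chi_k$, and similarly for the second components; feeding these equivalences into the same Boolean combination yields $q_1 + q_2 \models \varphi$, as required.

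**Main obstacle.** The routine part is the induction on $0$-level contexts; the delicate point is the bookkeeping that extracts, from an arbitrary $\varphi\in\hmo$, the individual level-$0$ diamond subformulas $\diam{a_k}\chi_k$ \emph{as elements of $\hmo^\equiv$}. One must argue carefully that each such subformula is reachable from $\varphi$ by (i) descending through $\neg$ and (ii) discarding conjuncts at level $0$ — exactly the two operations controlled by (AC) — and that one never needs to descend through a diamond to isolate it, which is what makes it a \emph{$0$-level} context application. Making this decomposition precise (e.g. by induction on $\varphi$, maintaining the invariant that every maximal level-$0$ $\diam{a}\chi$-subformula of $\varphi$ lies in $\hmo^\equiv$) is the crux of the argument.
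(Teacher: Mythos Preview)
Your proposal is essentially correct, and interestingly it is closer in organisation to the paper's proof of the \emph{action prefix} theorem (which also decomposes $\varphi$ as a diamond-free multicontext over its maximal level-$0$ subformulas $\diam{a_i}\varphi_i$) than to the paper's own proof of this theorem. The paper instead argues by straightforward structural induction on $\varphi\in\hmo$: the $\diam{a}\psi$ case uses that $p_1+p_2\models\diam{a}\psi$ iff $p_1\models\diam{a}\psi$ or $p_2\models\diam{a}\psi$; the conjunction case uses AC to put each conjunct in $\hmo^\equiv$ and appeals to the inductive hypothesis; the negation case strips off leading $\neg$'s and reduces to the previous cases. Your atomic decomposition and the paper's induction unwind to the same computation; your version has the mild advantage that it does not need to re-invoke an inductive hypothesis on formulas that are only known to lie in $\hmo^\equiv$ rather than $\hmo$.

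One concrete imprecision to fix: your claim that each atom $\diam{a_k}\chi_k$ lands in $\hmo^\equiv$ is not correct as stated, and the remark that ``descending through $\neg$'' is ``controlled by (AC)'' misreads the hypothesis. AC lets you pick out a conjunct of any level-$0$ conjunction inside a formula of $\hmo$; applying it to the \emph{innermost} level-$0$ conjunction above $\diam{a_k}\chi_k$ yields a conjunct of the form $\neg^m\diam{a_k}\chi_k\in\hmo^\equiv$ (and if there is no such conjunction then $\varphi$ itself has this form). When $m$ is odd you only get $\neg\diam{a_k}\chi_k\in\hmo^\equiv$, not $\diam{a_k}\chi_k\in\hmo^\equiv$; for instance $\hmo=\{\neg\diam{a}\true\}$ satisfies AC but $\diam{a}\true\notin\hmo^\equiv$. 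This does no damage to your argument---having either $\diam{a_k}\chi_k$ or its negation in $\hmo^\equiv$ is exactly what you need to transfer the truth value of $\diam{a_k}\chi_k$ from $p_j$ to $q_j$---but the sentence asserting $\diam{a_k}\chi_k\in\hmo^\equiv$ and the surrounding justification should be adjusted.
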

\begin{proof} Assume a modal language $\hmo$ with the AC property. Let $p_1 \hmeq q_1$ and $p_2 \hmeq q_2$. We show that for any $\varphi\in \hmo$:
\begin{center}
$p_1 + p_2 \models \varphi ~\implies~ q_1 + q_2 \models \varphi$
\end{center}
(the converse implication "$\Leftarrow$" is symmetric).  We apply induction on the structure of $\varphi$. The base case ($\true$) is trivial. We proceed with the inductive step. Assume that $p_1 + p_2 \models \varphi$. We have to consider the following cases:
 
\begin{itemize}
\item $\varphi = \diam{a} \psi$: then either $p_1 \models \diam{a} \psi$ or $p_2 \models \diam{a} \psi$. From the equivalence of components we have either $q_1 \models \diam{a} \psi$ or $q_2 \models \diam{a} \psi$, which yields $q_1 + q_2 \models \diam{a} \psi$.

\item $\varphi = \bigwedge_{i \in I} \varphi_i$: we have $p_1 + p_2 \models \bigwedge_{i \in I}\, \varphi_i ~\iff~ \forall_{i \in I}: p_1 + p_2 \models \varphi_i ~\iff~ \forall_{i \in I}: q_1 + q_2 \models \varphi_i$ (AC + inductive hypothesis) $ \iff~ q_1 + q_2 \models \bigwedge_{i \in I} \varphi_i$.

\item $\varphi = \neg \psi$: let $\psi'$ be the outermost subformula of $\varphi$ which does not begin with a "$\neg$" symbol (so $\varphi = (\neg)^{n} \psi'$). Then $\varphi$ is logically equivalent to either $\psi'$ or $\neg \psi'$. The case $\psi' = \true$ is trivial. Also, the case where $\varphi \equiv \psi'$ can be handled analogously as the first two cases. We thus have to consider two possibilities:

\begin{itemize}
\item $\varphi \equiv \neg \diam{a} \varphi'$: we have $p_1 + p_2 \models \neg \diam{a} \varphi' ~\iff~ p_1 \models \neg \diam{a} \varphi' \wedge p_2 \models \neg \diam{a} \varphi' ~\iff~ q_1 \models \neg \diam{a} \varphi' \wedge q_2 \models \neg \diam{a} \varphi'$ (equivalence of components) $\iff~q_1 + q_2 \models \neg \diam{a} \varphi'$.

\item $\varphi \equiv \neg \bigwedge_{i \in I} \varphi_i$: we have $p_1 + p_2 \models \neg \bigwedge_{i \in I}\, \varphi_i ~\iff~ \exists_{i \in I}: p_1 + p_2 \models \neg \varphi_i ~\iff~ \exists_{i \in I}: q_1 + q_2 \models \neg \varphi_i$ (AC + inductive hypothesis) $ \iff~ q_1 + q_2 \models \neg \bigwedge_{i \in I} \varphi_i  $.
\end{itemize}

\end{itemize}
 
\end{proof}

For example, consider $\hmo = \{  \diam{a}(\diam{a}\true \wedge \diam{b}\true)  \wedge \neg \diam{b}\true, \diam{a}(\diam{a}\true \wedge \diam{b}\true), \neg \diam{b}\true \}$. The language $\hmo$ satisfies the AC requirement, and so the corresponding equivalence $\hmeq$ is a congruence with respect to $+$. 

\vspace{2mm}

Almost all modal characterizations of standard process semantics from Section \ref{sec:hml} fulfill AC. The only exception is the modal characterization of completed trace equivalence,
although we can provide an alternative characterization that meets the AC requirement:
\begin{center}
$\hmo^\ast_{CT} \hspace{5.0 pt} \varphi ::= \varphi'~(\varphi' \in \hmo_{CT})\hspace{3.0 pt}|\hspace{3.0 pt} \neg \diam{a} \true$ 
\end{center}
The characterization $\hmo^\ast_{CT}$ is the same as $\hmo_{CT}$, except that it includes formulas $\neg \diam{a} \true$. Clearly this does not change the corresponding semantics.

\subsection{Action prefix}

In the case of action prefix, it is easy to obtain a sufficient congruence requirement; the crucial observatioin is that $a.p \models \diam{a} \varphi$ if and only if $p \models \varphi$, so we need to make sure that for each formula $C_0[\diam{a}\varphi] \in \hmo$, the subformula $\varphi$ also belongs to the language $\hmo$. If this is not the case, an equivalence might not be a congruence. For instance, if $\hmo = \{ \diam{a}\diam{a}\true \}$, then $a.{\bf 0} \hmeq {\bf 0}$, but $a.a.{\bf 0} \not\hmeq a.{\bf 0}$.

\begin{theorem} Let $\hmo \subseteq \hml$ and fix $a \in {\it Act}$. If for any 0-level context $C_0[]$ and $\varphi\in \hml$,
\begin{center}
(AP)~~~~~$C_0[\diam{a} \varphi] \in \hmo$ implies that $\varphi \in \hmo^\equiv$,
\end{center}
then $\hmeq$ is a congruence with respect to the action prefix operator $a.()$.
 \end{theorem}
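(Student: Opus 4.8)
The plan is to follow the pattern of the proof for alternative composition: fix $p \hmeq q$ and show that $a.p$ and $a.q$ satisfy exactly the same formulas of $\hmo$, hence $a.p \hmeq a.q$. The essential difference with the AC case is that the AP condition says nothing about the individual conjuncts of a formula in $\hmo$, so a plain structural induction over formulas of $\hmo$ breaks at the conjunction step. I would therefore first prove a purely semantic fact about the action-prefix operator, and only invoke AP and the hypothesis $p \hmeq q$ at the very end.

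First step: by structural induction on an arbitrary $\varphi \in \hml$, prove that if $p \models \psi \iff q \models \psi$ holds for every $\psi$ such that $\diam{a}\psi$ occurs in $\varphi$ at $0$-level (i.e.\ $\varphi = C_0[\diam{a}\psi]$ for some $0$-level context $C_0[]$), then $a.p \models \varphi \iff a.q \models \varphi$. Since this hypothesis is symmetric in $p$ and $q$, there is no need to split off a "converse". The case $\true$ is trivial. For $\varphi = \diam{b}\psi$ with $b \neq a$, both $a.p$ and $a.q$ fail $\varphi$ because $a.p$ has exactly one transition, namely $a.p \transa p$ (likewise for $q$), so both sides are false; for $b = a$, the formula $\diam{a}\psi$ is the unique $0$-level subformula of $\varphi$ of the relevant shape — everything else in $\varphi$ lies below the leading diamond — and since the unique $a$-successor of $a.p$ is $p$, we get $a.p \models \diam{a}\psi \iff p \models \psi \iff q \models \psi \iff a.q \models \diam{a}\psi$. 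For $\varphi = \neg\psi$ and $\varphi = \bigwedge_{i \in I}\varphi_i$, every $0$-level $\diam{a}$-subformula of an immediate subformula is again a $0$-level $\diam{a}$-subformula of $\varphi$, because neither prefixing a negation nor conjoining further formulas introduces a diamond above the hole; hence the hypothesis is inherited by the immediate subformulas, the induction hypothesis applies to them, and the outermost connective propagates the equivalence.

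Second step: let $\varphi \in \hmo$ be arbitrary. By AP, every $0$-level subformula $\diam{a}\psi$ of $\varphi$ has $\psi \in \hmo^\equiv$, so there is $\psi' \in \hmo$ with $\psi \equiv \psi'$; since $p \hmeq q$ we have $p \models \psi' \iff q \models \psi'$, and $\psi \equiv \psi'$ then gives $p \models \psi \iff q \models \psi$. Thus $p$ and $q$ agree on all $0$-level $\diam{a}$-subformulas of $\varphi$, and the first step yields $a.p \models \varphi \iff a.q \models \varphi$. As $\varphi$ ranged over $\hmo$, this is precisely $a.p \hmeq a.q$.

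I expect the only delicate point to be formulating the induction statement of the first step correctly: the property "$p$ and $q$ agree on all $0$-level $\diam{a}$-subformulas" must be exactly the invariant that survives the descent through $\neg$ and $\bigwedge$ and that collapses to a single relevant instance (or to vacuity) at a leading diamond. Once this is in place, everything reduces to unwinding the semantics of $a.()$, together with the by-now-familiar observation that membership in $\hmo^\equiv$ suffices to transfer a truth value between two $\hmeq$-equivalent processes.
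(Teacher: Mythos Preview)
Your proposal is correct and follows essentially the same approach as the paper: both reduce satisfaction of $\varphi$ by $a.p$ to the satisfaction of its $0$-level diamond subformulas, then invoke AP and $p\hmeq q$ to transfer these. The only cosmetic difference is that the paper phrases the first step as a one-shot multicontext decomposition $\varphi = C[\diam{a_i}\varphi_i]_{i\in I}$ with $C$ purely Boolean, whereas you unfold the same observation as an explicit structural induction with the invariant ``$p$ and $q$ agree on all $0$-level $\diam{a}$-subformulas''.
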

\begin{proof}
Let $p \hmeq q$. We need to show that for any $\varphi \in \hmo$, $a.p \models \varphi \iff a.q \models \varphi$.

Take any $\varphi \in \hmo$. Let $\varphi = C[\diam{a_i} \varphi_i]_{i \in I}$ such that the multicontext $C[]_{i \in I}$ does not contain any action prefix symbols. That is, the $\diam{a_i} \varphi_i$ for $i\in I$ are all action prefix subformulas of $\varphi$ that appear at level zero. Since $C[]_{i\in I}$ is built from only $\true$, conjunction and negation, whether a process satisfies $\varphi$ is completely determined by the satisfiability of $\diam{a_i} \varphi_i$ for $i \in I$ by this process. In other words, if $\forall_{i \in I}: (p_1 \models \diam{a_i} \varphi_i \iff p_2 \models \diam{a_i} \varphi_i)$, then $p_1 \models \varphi \iff p_2\models \varphi$. Coming back to our setting with $a.p$ and $a.q$, take an arbitrary $i \in I$. We have:\\
$a.p \models \diam{a_i} \varphi_i$\\
$\iff (a_i = a) \wedge p \models \varphi_i$\\
$\iff (a_i = a) \wedge q \models \varphi_i$ (AP + $p\hmeq q$)\\
$\iff a.q \models \diam{a_i} \varphi_i$.\\
The choice of $i$ was arbitrary, hence the earlier remark yields: $a.p \models \varphi \iff a.q \models \varphi$.
\end{proof}
The AP condition is satisfied by all modal characterizations from Section \ref{sec:hml}.

\section{Restriction operators: projection and encapsulation}

We now consider projection and encapsulation operators. The $n$th projection of a process $p$, for $n\geq 0$, mimicks the behaviour of $p$ up to level $n$:
\begin{center}
\[ \frac{x \stackrel{a}{\rightarrow}x'}{\pi_{n+1}(x)\stackrel{a}{\rightarrow}\pi_{n}(x')} \]
\end{center}
Applying encapsulation with parameter $B \subseteq {\it Act}$ removes all transitions whose labels are in $B$ from the process:
\begin{center}
\[ \frac{x \stackrel{a}{\rightarrow}x'~~(a \not \in B)}{\enc_{B}(x)\stackrel{a}{\rightarrow}\enc_{B}(x')} \]
\end{center}
More generally, we consider unary restriction operators $f$ such that given a process $p$, the process $f(p)$ can be viewed as a subgraph of $p$. Below we will give a precise description of which restriction operators are covered. For the projection operator $\pi_n$ as well as for the encapsulation operator $\enc_B$, given any $\hml$ formula we can deduce in advance which of its subformulas $\diam{a} \varphi$ will always yield false, regardless of the process $\pi_n(p)$ or $\enc_B(p)$ for which the $\hml$ formula is evaluated. In case of a process $\pi_n(p)$, any subformula $\diam{a} \varphi$ that appears at level $n$ can be replaced by $\false$. And in case of a process $\enc_B(p)$, any subformula $\diam{b} \varphi$ with $b \in B$ can be replaced by $\false$.

We cannot reason in this way about any restriction operator. For example, consider the priority operator $\theta$, which assumes a partial order $<$ on the set of actions and allows us to execute an action only if no action with higher priority is executable at the same time:
\begin{center}
\[ \frac{x \transa x' ~~~~~~ \forall b\in{\it Act}\,(a<b\,\Rightarrow\, x \not \transb)} {\theta(x) \transa \theta(x')} \]
\end{center}
Suppose that $a > b$ and there is a process $p$ of which we only know that it satisfies $\diam{b} \true$. This knowledge is not sufficient to determine whether $\theta(p) \models \diam{b} \true$.

Let $f$ be a unary operator such as $\pi_n$ or $\enc_B$. We would like to define for each formula $\varphi \in \hml$ a corresponding formula ${\it cut}_f(\varphi)$ in which every subformula $\diam{a} \varphi'$ which is known in advance to be unsatisfiable when evaluating any process $f(p)$ is replaced by $\false$. Actually this means that either we can replace a larger subformula by $\true$, or the entire formula becomes $\false$. Namely, we can replace the first innermost negation symbol (closest to the introduced $\false$) and the following subformula by $\true$; if the $\false$ symbol does not appear within the scope of a negation symbol, then the whole formula yields $\false$. If a language $\hmo\subseteq\hml$ is closed under ${\it cut}_f$, then it induces a congruence with respect to $f$. The whole idea is made formal below.

\begin{lemma}
\label{lem:cut1} Let $f$ be a unary process operator. Suppose there exists a function  ${\it cut}_f: \hml \longrightarrow \hml$ such that for any process $p$ and $\varphi\in\hml$,
\begin{center}
\begin{tabular}{l l}
(CUT)~~ & $f(p) \models \varphi~\iff~p \models {\it cut}_f(\varphi)$\\
\end{tabular}
\end{center}
Then for any language $\hmo$ satisfying
\begin{center}
$\varphi \in \hmo~\implies~({\it cut}_f(\varphi) \in \hmo^\equiv~\lor~{\it cut}_f(\varphi)\equiv \false)$
\end{center}
the corresponding equivalence $\hmeq$ is a congruence with respect to $f$.
\end{lemma}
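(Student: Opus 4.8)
The plan is to unfold the definition of congruence directly and push everything through the (CUT) equation. Assume $p \hmeq q$; the goal is $f(p) \hmeq f(q)$, i.e.\ that $f(p)$ and $f(q)$ satisfy exactly the same formulas of $\hmo$. So I would fix an arbitrary $\varphi \in \hmo$. By (CUT) we have $f(p)\models\varphi \iff p\models{\it cut}_f(\varphi)$ and likewise $f(q)\models\varphi \iff q\models{\it cut}_f(\varphi)$, so it suffices to establish $p\models{\it cut}_f(\varphi) \iff q\models{\it cut}_f(\varphi)$.

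Now I would invoke the closure hypothesis on $\hmo$, which yields a case split on $\varphi$. If ${\it cut}_f(\varphi)\equiv\false$, then no process satisfies it, so both $p\models{\it cut}_f(\varphi)$ and $q\models{\it cut}_f(\varphi)$ are false, and consequently $f(p)\models\varphi$ and $f(q)\models\varphi$ are both false as well, so the biconditional holds vacuously. If instead ${\it cut}_f(\varphi)\in\hmo^\equiv$, pick $\psi\in\hmo$ with $\psi\equiv{\it cut}_f(\varphi)$; from $p\hmeq q$ we get $p\models\psi \iff q\models\psi$, and since $\psi$ and ${\it cut}_f(\varphi)$ hold of exactly the same processes this gives $p\models{\it cut}_f(\varphi) \iff q\models{\it cut}_f(\varphi)$. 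Chaining these equivalences back through (CUT) produces $f(p)\models\varphi \iff f(q)\models\varphi$.

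Since $\varphi\in\hmo$ was arbitrary, $f(p)\hmeq f(q)$; and since $p\hmeq q$ was arbitrary, $f$ is compositional for $\hmeq$. There is no genuine obstacle in the argument — it is a direct translation along the (CUT) identity — and the only point requiring a moment's attention is the case distinction coming from the disjunction in the closure hypothesis, in particular the observation that the ${\it cut}_f(\varphi)\equiv\false$ branch costs nothing because an unsatisfiable formula separates no pair of processes. (The real work behind the lemma, namely exhibiting a concrete ${\it cut}_f$ satisfying (CUT) for operators such as $\pi_n$ and $\enc_B$, is not part of this statement and is presumably carried out subsequently.)
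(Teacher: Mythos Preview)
Your proof is correct and follows exactly the same approach as the paper's own proof: apply (CUT) on both sides, then bridge $p\models{\it cut}_f(\varphi)\iff q\models{\it cut}_f(\varphi)$ via the case split coming from the closure hypothesis. The paper compresses this into a single chain of equivalences, while you spell out the two cases and the choice of a witness $\psi\in\hmo$ more explicitly, but the argument is identical.
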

\begin{proof} Suppose $\hmo \subseteq \hml$ and $p \hmeq q$. We have
$f(p) \models \varphi ~\iff~ p \models {\it cut}_f(\varphi)$ (CUT) $\iff q \models {\it cut}_f(\varphi)$ (either because ${\it cut}_f(\varphi) \in \hmo^\equiv$ and $p \hmeq q$, or because ${\it cut}_f(\varphi)\equiv \false$) $\iff f(q) \models \varphi$ (CUT).
\end{proof}
The next lemma gives an explicit condition for a modal language to induce a congruence in case ${\it cut}_f$ formulas are obtained from the original ones by turning certain subformulas $\diam{a} \varphi$ into $\false$.

\begin{lemma}
\label{lem:cut2}
Assume $f$ and ${\it cut}_f$ are as in Lem.~\ref{lem:cut1}, and satisfy CUT. Suppose that for each $\varphi \in \hml$ there exists a multicontext $C[]_{ i \in I}$ such that 
$\varphi = C[\diam{a_i} \varphi_i]_{i \in I}$ and ${\it cut}_f(\varphi) \equiv C[\false]_{i \in I}$.
Then for each language $\hmo \subseteq \hml$ that satisfies for any context $C'[]$ and $\varphi \in \hml$,
\begin{center}
\begin{tabular}{l l}
(RES) & $C'[\neg \varphi] \in \hmo$ implies $C'[\true] \in \hmo^\equiv$,
\end{tabular}
\end{center}
the corresponding equivalence $\hmeq$ is a congruence with respect to $f$.
\end{lemma}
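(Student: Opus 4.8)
The plan is to feed the hypotheses into Lemma~\ref{lem:cut1}, so it suffices to show that for every $\varphi\in\hmo$ either ${\it cut}_f(\varphi)\in\hmo^\equiv$ or ${\it cut}_f(\varphi)\equiv\false$. First I would record a consequence of CUT that is used throughout: ${\it cut}_f$ respects logical equivalence, since $f(p)\models\varphi\iff p\models{\it cut}_f(\varphi)$ holds for all $p$, so $\varphi\equiv\varphi'$ forces ${\it cut}_f(\varphi)\equiv{\it cut}_f(\varphi')$. In view of this it is both harmless and, for the induction below, essential to prove the stronger statement: \emph{for every $\theta\in\hmo^\equiv$, either ${\it cut}_f(\theta)\in\hmo^\equiv$ or ${\it cut}_f(\theta)\equiv\false$.}

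Fix such a $\theta$, pick $\chi\in\hmo$ with $\chi\equiv\theta$, and using the hypothesis write $\chi=C[\diam{a_i}\varphi_i]_{i\in I}$ with ${\it cut}_f(\chi)\equiv C[\false]_{i\in I}$; then ${\it cut}_f(\theta)\equiv{\it cut}_f(\chi)\equiv C[\false]_{i\in I}$. If $I=\emptyset$ then ${\it cut}_f(\theta)\equiv\chi\in\hmo$ and we are done. If some hole $j$ of $C$ is outside the scope of every negation of $C$, then in $C[\false]_{i\in I}$ the $\false$ in that hole propagates upward through the surrounding conjunctions and diamonds ($\false\wedge\cdots\equiv\false$, $\diam{a}\false\equiv\false$), so $C[\false]_{i\in I}\equiv\false$ and again we are done. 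In the remaining case every hole lies inside some negation; fix a hole $j$ and let $\nu$ be the innermost negation of $C$ above it, with body $\neg\psi$ where $\psi=E[\diam{a_i}\varphi_i]_{i\in I_E}$, $I_E\subseteq I$, $j\in I_E$, and, by minimality of $\nu$, the hole $j$ is not inside any negation occurring in $E$. Substituting $\false$ for the holes of $E$, the $\false$ in hole $j$ again propagates, so $E[\false]_{i\in I_E}\equiv\false$ and the body $\neg\psi$ becomes $\equiv\true$; hence $C[\false]_{i\in I}\equiv\widehat C[\false]_{i\in I\setminus I_E}$, where $\widehat C$ arises from $C$ by replacing the $\nu$-block $\neg\psi$ with $\true$ and retaining the holes $I\setminus I_E$.

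Now I would apply RES. Let $C'[]$ be the one-hole context obtained from $C$ by filling every hole outside $\psi$ with the corresponding $\diam{a_i}\varphi_i$; then $\chi=C'[\neg\psi]\in\hmo$, so RES gives $C'[\true]\in\hmo^\equiv$, and $C'[\true]=\widehat C[\diam{a_i}\varphi_i]_{i\in I\setminus I_E}$. Applying the hypothesis to $C'[\true]$ gives ${\it cut}_f(C'[\true])\equiv\widehat C[\false]_{i\in I\setminus I_E}\equiv{\it cut}_f(\chi)\equiv{\it cut}_f(\theta)$ — here one uses that the multicontext furnished by the hypothesis is the canonical one (the formula with all subformulas that $f$ forces false excised), which is the case for $\pi_n$ and for $\enc_B$. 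Since $C'[\true]\in\hmo^\equiv$ and contains strictly fewer occurrences of cut-subformulas than $\chi$ (those inside $\psi$ disappear and no new ones are created), an induction on the number of such occurrences applies the induction hypothesis to $C'[\true]$ and concludes that ${\it cut}_f(\theta)\equiv{\it cut}_f(C'[\true])$ is in $\hmo^\equiv$ or $\equiv\false$; Lemma~\ref{lem:cut1} then yields the claim.

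I expect the main obstacle to be exactly the point forcing the detour through $\hmo^\equiv$: RES only returns membership in $\hmo^\equiv$ (not in $\hmo$), so the recursion cannot be run inside $\hmo$ and the induction hypothesis must speak about $\hmo^\equiv$ — which is legitimate only because ${\it cut}_f$ respects $\equiv$, and which is also what lets us swap $\chi$ for a convenient equivalent at each step. Two further points need care rather than new ideas: for genuinely infinitary $\hml$ the ``one negation at a time'' recursion need not terminate (an infinite conjunction of negated cut-blocks), so the well-founded measure has to be arranged to collapse such conjunctions simultaneously (for instance a measure built from the nesting depth of negations above cut-positions, processing a conjunction through all its conjuncts at once); and the identification ${\it cut}_f(C'[\true])\equiv\widehat C[\false]_{i\in I\setminus I_E}$ rests on reading the hypothesis coherently — equivalently, that each cut-subformula is genuinely $f$-false at every sub-process that can reach its position — which is transparent for $\pi_n$, where the projection budget has been exhausted, and for $\enc_B$, where the label is removed.
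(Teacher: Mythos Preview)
Your strategy---reduce to Lemma~\ref{lem:cut1} by showing that every $\varphi\in\hmo$ has ${\it cut}_f(\varphi)\in\hmo^\equiv$ or ${\it cut}_f(\varphi)\equiv\false$, and observe that when ${\it cut}_f(\varphi)\not\equiv\false$ each substituted $\false$ must lie under a negation---is exactly the paper's. The difference is in how the negation blocks are eliminated. You peel off one innermost negation at a time, apply RES, and recurse on the resulting formula in $\hmo^\equiv$; this is what forces you to (i) re-invoke the multicontext hypothesis on the intermediate formula $C'[\true]$, whence your coherence worry, and (ii) set up a well-founded measure, whence your termination worry for infinitary conjunctions. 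The paper instead handles \emph{all} innermost negation blocks in one stroke: writing $\varphi=C'[\neg D^i[\diam{a_i}\varphi_i]]_{i\in I}$ with no negation above the holes inside any $D^i$, it observes directly that ${\it cut}_f(\varphi)\equiv C'[\neg D^i[\false]]_{i\in I}\equiv C'[\true]_{i\in I}$ and then invokes RES to conclude $C'[\true]_{i\in I}\in\hmo^\equiv$. Because ${\it cut}_f$ is never evaluated on an intermediate formula, the coherence issue you flag simply does not arise, and no induction on the number of cut positions is needed.

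The concern you raise about $\hmo$ versus $\hmo^\equiv$ is real, however: the paper's last step replaces \emph{all} negation blocks by $\true$ at once and cites RES, but RES as stated is a single-hole condition, so an iteration through $\hmo^\equiv$ is tacitly being used---precisely the point your argument makes explicit. In short, your detour through $\hmo^\equiv$ is addressing a gap the paper glosses over; but if you first adopt the paper's simultaneous rewriting ${\it cut}_f(\varphi)\equiv C'[\true]_{i\in I}$ and only then worry about iterating RES, the two auxiliary obstacles you identified (coherence of ${\it cut}_f$ across formulas, and termination) disappear, leaving only the single task of justifying $C'[\true]_{i\in I}\in\hmo^\equiv$ from the one-hole RES condition.
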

\begin{proof} By Lem.\ \ref{lem:cut1} it suffices to prove that for all $\varphi \in \hmo$ either ${\it cut}_f(\varphi) \in \hmo$ or ${\it cut}_f(\varphi) \equiv \false$. Take any $\varphi \in \hmo$ such that ${\it cut}_f(\varphi) \not \equiv \false$. By assumption, ${\it cut}_f(\varphi) \equiv C[\false]_{i \in I}$ for some multicontext $C[]_{i \in I}$. Since ${\it cut}_f(\varphi) \not \equiv \false$, clearly each occurrence of $\false$ in this formula must be within the scope of a negation symbol.
Hence ${\it cut}_f(\varphi) \equiv C'[\neg D^i[\false]]_{i \in I}$, where we can choose contexts $D^i[]$ for $i\in I$ such that in each $D^i[]$, $[]$ is not within the scope of a negation.
Then $C'[\neg D^i[\false]]_{i \in I} \equiv C'[\true]_{i \in I}$. Since $\hmo$ satisfies RES, $C'[\true]_{i \in I} \in \hmo^\equiv$. Hence ${\it cut}_f(\varphi) \in \hmo^\equiv$.
\end{proof}
We have provided a compositionality framework for a general class of restriction operators. What remains is to provide ${\it cut}_f$ functions for the projection and encapsulation operators.

\begin{lemma} 
\label{lem:cut3}
The functions ${\it cut}_f$ defined below are proper cutting functions (i.e., they satisfy condition CUT of Lem.~\ref{lem:cut1}).\vspace{4mm}\\
a) For the projection operators $\pi_n$ with $n\geq 0$:\vspace{2mm}\\
\begin{tabular}{lll}
${\it cut}_n(\true) = \true$ &
${\it cut}_n(\bigwedge_{i \in I} \varphi_i) = \bigwedge_{i \in I} {\it cut}_n(\varphi_i)$ &
${\it cut}_n(\neg \varphi) = \neg {\it cut}_n(\varphi)$ \vspace{2mm} \\
${\it cut}_0(\diam{a} \varphi) = \false$ & ${\it cut}_{n+1}(\diam{a}\varphi) = \diam{a}{\it cut}_n(\varphi)$
\end{tabular}
\\ \\ \\
b) For the encapsulation operators $\enc_B$ with $B \subseteq {\it Act}$:\vspace{2mm}\\
\begin{tabular}{lll}
${\it cut}_B(\true) = \true$ &
${\it cut}_B(\bigwedge_{i \in I} \varphi_i) = \bigwedge_{i \in I} {\it cut}_B(\varphi_i)$ &
${\it cut}_B(\neg \varphi) = \neg {\it cut}_B(\varphi)$ \vspace{2mm} \\
${\it cut}_B(\diam{a} \varphi) = \false$ if $a \in B$ &
${\it cut}_B(\diam{a}\varphi) = \diam{a}{\it cut}_B(\varphi)$ if $a \not \in B$
\end{tabular}
\end{lemma}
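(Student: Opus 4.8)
The plan is to prove both parts by structural induction on the formula $\varphi$, reading off each case directly from the transition rule of the operator and the defining clauses of ${\it cut}_f$. For part (a) the only subtlety is that descending through a diamond decreases the projection level, so the statement must be phrased with the level universally quantified: we prove \emph{for every $n\geq 0$, every process $p$, and every $\varphi\in\hml$: $\pi_n(p)\models\varphi \iff p\models{\it cut}_n(\varphi)$}, carrying out the induction on the structure of $\varphi$ while $n$ and $p$ stay universally quantified. The cases $\varphi=\true$, $\varphi=\bigwedge_{i\in I}\varphi_i$ and $\varphi=\neg\psi$ are immediate, since satisfaction of a Boolean combination depends only on satisfaction of its immediate subformulas and ${\it cut}_n$ is defined to commute with $\wedge$ and $\neg$; the induction hypothesis then closes these cases. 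The interesting case $\varphi=\diam{a}\psi$ splits on the level. If $n=0$, then $\pi_0(p)$ is deadlocked — no transition rule has $\pi_0$ in its conclusion — so $\pi_0(p)\not\models\diam{a}\psi$, and ${\it cut}_0(\diam{a}\psi)=\false$ is satisfied by no process, so both sides are false. If $n=m+1$, the projection rule gives $\pi_{m+1}(p)\transa r$ exactly for $r$ of the form $\pi_m(p')$ with $p\transa p'$; hence $\pi_{m+1}(p)\models\diam{a}\psi$ iff some $p'$ with $p\transa p'$ satisfies $\pi_m(p')\models\psi$, which by the induction hypothesis applied to $\psi$ at level $m$ is equivalent to some $p'$ with $p\transa p'$ satisfying $p'\models{\it cut}_m(\psi)$, i.e.\ to $p\models\diam{a}{\it cut}_m(\psi)={\it cut}_{m+1}(\diam{a}\psi)$.

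For part (b) the argument is the same but lighter, because encapsulation leaves the nesting depth untouched. We fix $B\subseteq{\it Act}$ and show $\enc_B(p)\models\varphi \iff p\models{\it cut}_B(\varphi)$ for all $p$ by structural induction on $\varphi$. The Boolean cases are again routine. For $\varphi=\diam{a}\psi$ with $a\in B$, the side condition $a\notin B$ in the transition rule for $\enc_B$ rules out any $a$-transition from $\enc_B(p)$, so $\enc_B(p)\not\models\diam{a}\psi$, which matches ${\it cut}_B(\diam{a}\psi)=\false$. For $a\notin B$, the rule gives $\enc_B(p)\transa r$ exactly for $r=\enc_B(p')$ with $p\transa p'$, so $\enc_B(p)\models\diam{a}\psi$ iff some $p'$ with $p\transa p'$ satisfies $\enc_B(p')\models\psi$, which by the induction hypothesis is $p'\models{\it cut}_B(\psi)$, i.e.\ $p\models\diam{a}{\it cut}_B(\psi)={\it cut}_B(\diam{a}\psi)$.

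There is no genuine obstacle here — once the statements are set up correctly the proof is a routine induction. The only point requiring care is the quantifier structure in part (a): since the diamond case appeals to the result for $\pi_{m}$ while proving it for $\pi_{m+1}$, the induction hypothesis must range over all projection levels simultaneously (equivalently, the structural induction on $\varphi$ is carried out inside a universal quantification over $n$), rather than fixing $n$ at the outset. The second small thing worth stating explicitly is that $\pi_0$ denotes a deadlocked process, which is exactly what justifies the clause ${\it cut}_0(\diam{a}\varphi)=\false$.
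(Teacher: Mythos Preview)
Your proposal is correct and follows essentially the same route as the paper: structural induction on $\varphi$ for both parts, with the Boolean cases handled by the fact that ${\it cut}_f$ commutes with $\wedge$ and $\neg$, and the diamond case unwound via the transition rule of the operator. Your explicit remark that in part (a) the induction hypothesis must range over all projection levels is a point the paper glosses over (it silently invokes the hypothesis at level $n-1$ while calling the argument ``structural induction on $\varphi$''), so if anything your write-up is slightly more careful.
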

\begin{proof}
a) We prove CUT by induction on the structure of $\varphi$.
\begin{itemize}
\item $\varphi = \true$:

$\pi_n(p) \models \true$  and $p\models {\it cut}_n(\true)=\true$.

\item $\varphi = \diam{a} \psi$:

We distinguish the cases $n=0$ and $n > 0$.
Clearly $\pi_0(p) \not\models \diam{a} \psi$ and $p\not\models {\it cut}_0(\diam{a} \psi)=\false$.

If $ n > 0$, then $\pi_{n}(p) \models \diam{a} \psi
\iff \exists p': p \transa p' \wedge \pi_{n-1}(p') \models \psi$ (transition rule for $\pi_n$)
$\iff \exists p': p \transa p' ~\wedge~ p' \models {\it cut}_{n-1}(\psi)$ (structural induction)
$\iff p \models \diam{a}{\it cut}_{n-1}(\psi)
\iff p \models {\it cut}_{n}(\diam{a} \psi)$ (definition of ${\it cut}_n$).

\item $\varphi = \bigwedge_{i \in I} \psi_i$:

$\pi_n(p) \models \bigwedge_{i \in I} \psi_i
\iff \foralli: \pi_n(p) \models \psi_i
\iff \foralli: p \models {\it cut}_n(\psi_i)$ (structural induction)
$\iff p \models {\it cut}_n(\bigwedge_{i \in I} \psi_i)$ (definition of ${\it cut}_n$).

\item $\varphi = \neg \psi$:

$\pi_n(p) \models \neg \psi
\iff \pi_n(p) \not \models \psi
\iff p \not \models {\it cut}_n(\psi)$ (structural induction)
$\iff p \models \neg {\it cut}_n(\psi)$ $\iff p \models {\it cut}_n(\neg \psi)$ (definition of ${\it cut}_n$).
\end{itemize}


b) Again we use structural induction on $\varphi$.
\begin{itemize}
\item $\varphi = \true$:

$\enc_B(p) \models \true$  and $p \models {\it cut}_B(\true)=\true$.

\item $\varphi = \diam{a} \psi$:

Suppose first that $a  \in B$. Then $\enc_B(p) \not \models \diam{a} \psi$ (transition rule for $\enc_B$)
and $p \not \models {\it cut}_B(\diam{a} \psi)=\false$ (definition of ${\it cut}_B$).

Suppose now that $a \not \in B$.
Then $\enc_B(p) \models \diam{a} \psi$
$\iff \exists p': p \transa p' \wedge \enc_B(p') \models \psi$ (transition rule for $\enc_B$)
$\iff \exists p': p \transa p' \wedge p' \models {\it cut}_B(\psi)$ (structural induction)
$\iff p \models \diam{a}{\it cut}_B(\psi) \iff p \models {\it cut}_B(\diam{a}\psi)$ (definition of ${\it cut}_B$).

\item $\varphi = \bigwedge_{i \in I} \psi_i$:

$\enc_B(p) \models \bigwedge_{i \in I} \psi_i
\iff \foralli: \enc_B(p) \models \psi_i
\iff \foralli: p \models {\it cut}_B(\psi_i)$ (structural induction)
$\iff p \models {\it cut}_B(\bigwedge_{i \in I} \psi_i)$ (definition of ${\it cut}_B$).

\item $\varphi = \neg \psi$:

$\enc_B(p) \models \neg \psi$ ~
$ \iff \enc_B(p) \not \models \psi
\iff p \not \models {\it cut}_B(\psi)$ (structural induction)
$\iff p \models \neg {\it cut}_B(\psi)$ $\iff p \models {\it cut}_B(\neg \psi)$ (definition of ${\it cut}_B$).
\end{itemize}

\end{proof}
\begin{theorem}
For any language $\hmo \subseteq \hml$ satisfying RES, the corresponding equivalence $\hmeq$ is a congruence with respect to the projection operators $\pi_n$ (for $n\geq 0$) and the encapsulation operators $\enc_B$ (for $ B \subseteq {\it Act}$).
\end{theorem}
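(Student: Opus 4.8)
The plan is to obtain this theorem as an immediate consequence of Lemmas~\ref{lem:cut1}, \ref{lem:cut2} and \ref{lem:cut3}. Lemma~\ref{lem:cut3} already hands us concrete functions ${\it cut}_n$ and ${\it cut}_B$ and certifies that they satisfy CUT, so the only work left is to check that these functions have exactly the shape demanded by the hypothesis of Lemma~\ref{lem:cut2} — namely that they are obtained from a formula by turning certain diamond-subformulas into $\false$ while leaving the surrounding structure intact — and then to invoke Lemma~\ref{lem:cut2} together with the assumed RES property.

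Concretely, I would fix $f$ to be $\pi_n$ (resp.\ $\enc_B$) with its cutting function ${\it cut}_f = {\it cut}_n$ (resp.\ ${\it cut}_B$), take an arbitrary $\varphi \in \hml$, and exhibit the required multicontext $C[]_{i\in I}$. For $\pi_n$ the holes are placed at the subformula occurrences $\diam{a_i}\varphi_i$ that sit at level exactly $n$ in $\varphi$; for $\enc_B$ they are placed at the shallowest occurrences $\diam{b_i}\varphi_i$ along each branch whose label $b_i$ lies in $B$. In both cases the portion of $\varphi$ strictly above the holes is built solely from $\true$, conjunction, negation, and diamonds that ${\it cut}_f$ leaves untouched (diamonds at level ${<}\,n$ in the projection case, diamonds with labels outside $B$ in the encapsulation case). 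Since the defining clauses in Lemma~\ref{lem:cut3} make ${\it cut}_f$ commute with $\true$, $\bigwedge_{i\in I}$, $\neg$ and with precisely those surviving diamonds, and collapse each hole content $\diam{a_i}\varphi_i$ to $\false$, a routine induction on the structure of $\varphi$ yields $\varphi = C[\diam{a_i}\varphi_i]_{i\in I}$ and ${\it cut}_f(\varphi) = C[\false]_{i\in I}$; in the degenerate case where there are no such occurrences, ${\it cut}_f(\varphi) = \varphi$ and there is nothing to prove.

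Having verified the hypothesis of Lemma~\ref{lem:cut2} for each $\pi_n$ and each $\enc_B$, I would conclude directly from that lemma: since $\hmo$ satisfies RES, the induced equivalence $\hmeq$ is a congruence with respect to every projection operator $\pi_n$ and every encapsulation operator $\enc_B$. I expect the only mildly delicate point to be the bookkeeping in the structural induction that pins down $C[]_{i\in I}$ — one has to be sure that ``stopping'' the recursion exactly at the level-$n$ diamonds (resp.\ at the first in-$B$ diamonds) is faithful to the recursive definition of ${\it cut}_f$, i.e.\ that ${\it cut}_f$ never rewrites anything above a hole and always outputs $\false$ at a hole. Everything else is bought wholesale from Lemma~\ref{lem:cut3} and the two preceding cut-lemmas.
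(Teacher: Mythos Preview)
Your proposal is correct and follows essentially the same route as the paper: invoke Lemma~\ref{lem:cut3} for CUT, observe that ${\it cut}_n$ and ${\it cut}_B$ act by replacing certain diamond-subformulas with $\false$ so that the hypothesis of Lemma~\ref{lem:cut2} is met, and conclude by that lemma together with RES. The paper's own proof is only a two-line sketch asserting this replacement property without the explicit description of the multicontext; your additional bookkeeping (holes at level-$n$ diamonds, resp.\ at the shallowest $B$-labelled diamonds) is a correct and welcome elaboration of that sketch.
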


\begin{proof}
By Lem.\ \ref{lem:cut3}, the functions ${\it cut}_n$ and ${\it cut}_B$ satisfy CUT.
Observe that the ${\it cut}_n$ and ${\it cut}_B$ functions defined in the Lem.~\ref{lem:cut3} only replace certain subformulas $\diam{a} \psi$ of the original formula with $\false$. So they meet the requirements of Lem.~\ref{lem:cut2}. Congruence is thus an immediate consequence of Lem.\ \ref{lem:cut2}.
\end{proof}

To demonstrate that the RES requirement is essential, consider the following counterexamples.
\begin{itemize}
\item
For projection, take $\hmo = \{ \neg \diam{a} \neg \diam{a} \true \}$. We have $aa{\bf 0} \hmeq {\bf 0}$, but $\pi_1(aa{\bf 0}) \not\hmeq \pi_1({\bf 0})$.
\item
 For encapsulation, take $\hmo = \{ \diam{a} \neg \diam{b} \true\}$. We have $ab{\bf 0} \hmeq {\bf 0}$, but $\enc_{\{b\}}(ab{\bf 0}) \not \hmeq \enc_{\{b\}}({\bf 0})$.
\end{itemize}

The RES requirement is satisfied by every characterization from Section \ref{sec:hml}, except for completed trace observations. Completed trace equivalence is a congruence with respect to projection operators, but not encapsulation. Take for instance the completed trace equivalent processes $a(b{\bf 0}+c{\bf 0})$ and $ab{\bf 0}+ac{\bf 0}$. We have $\enc_{\{b\}}(a(b{\bf 0}+c{\bf 0})) \,\sim_{CT}\, ac{\bf 0} \,\not \sim_{CT}\, a{\bf 0}+ac{\bf 0} \,\sim_{CT}\, \enc_{\{b\}}(ab{\bf 0}+ac{\bf 0})$.

\section{Parallel composition ($||$)}

We now consider the parallel composition operator (without communication). That is, $p ||q$ behaves as $\lmerge{p}{q} + \lmerge{q}{p}$ where the left-merge operator is defined by
\begin{center}
\[ \frac{x \transa x'}{\lmerge{x}{y} \transa x' || y} \]
\end{center}

Let us restrict for a moment to only trace formulas (meaning that conjunctions are disregarded). The following example shows that the requirement AP and even being closed under substrings is not sufficient (by a substring of $w$ we mean a subsequence constisting of elements appearing \textit{consecutively} in $w$). 
Take $\hmo = \{ \diam{a}\true,\diam{b}\true,\diam{a}\diam{b}\true,\diam{a}\diam{b}\diam{a}\true,\diam{b}\diam{a}\true \}$. This language not only satisfies AP, but is also closed under prefixes and substrings (but not arbitrary subsequences). However, we have $aa{\bf 0} \hmeq a{\bf 0}$, but $aa{\bf 0} || b{\bf 0} \models \diam{a}\diam{b}\diam{a}\true$ while $a{\bf 0}||b{\bf 0}$ does not satisfy this formula.

This example suggests that if a trace $\sigma$ belongs to $\hmo$, then all \textit{subsequences} of $\sigma$ must belong to the language as well. This is not unexpected; the behaviour of parallel composition consists of all possible interleavings of the component processes, and all of these interleavings should be described in the modal characterization.

It is also necessary to close the language on subconjunctions. Indeed, take $\hmo = \{ \diam{a}\true \wedge \diam{b} \true \}$, a language which does not meet this condition. We have $a{\bf 0} \hmeq b{\bf 0}$, but $a{\bf 0}||b{\bf 0}\models\diam{a}\true \wedge \diam{b} \true$ while $b{\bf 0}||b{\bf 0}$ does not satisfy this formula.

In case of general $\hml$ formulas, we first define a generalization of a subsequence for an arbitrary formula $\varphi \in \hml$ by specifying a set of subformulas with possible replacement from a lower level. We thus define $Sub(\varphi)$ as the smallest set of $\hml$ formulas satisfying:
\begin{itemize}
\item
$\varphi \in Sub(\varphi)$;
\item
$\varphi' \in Sub(\varphi) \implies \{ D[\psi] ~|~ \varphi' = D[C[\psi]] \} \subseteq Sub(\varphi)$.
\end{itemize}

We now define a tool to infer satisfaction of modal formulas by a parallel composition $p||q$ from the formulas satisfied by the component processes $p$ and $q$. This is accomplished by the function $Par$, which given $A\subseteq \hml$ and $B\subseteq \hml$, returns the collection of formulas that are certainly satisfied by a parallel composition of two processes satisfying $A$ and $B$ respectively. One can view $Par(A,B)$ as parallel composition operator on collections of modal formulas.

Formally, $Par: {\cal P}(\hml) \times {\cal P}(\hml) \longrightarrow {\cal P}(\hml)$ is defined with induction on the structure of formulas.
\begin{itemize}
\item $\true \in Par(A,B)$ 
\item $\diam{a} \varphi \in Par(A,B) $
$~\iffdef~$ $( \exists \diam{a} \varphi_A \in A: \varphi \in Par(\varphi_A,B))$
$\vee ( \exists \diam{a} \varphi_B \in B: \varphi \in Par(A,\varphi_B))$
\item $ \bigwedge_{i \in I} \varphi_i \in Par(A,B)$
$~\iffdef~ \forall_{i \in I}: \varphi_i \in Par(A,B)$
\item $\neg \varphi \in Par(A,B)$
$ ~\iffdef~ \forall C,D \subseteq Sub(\varphi): \varphi \in Par(C,D) ~ (\exists \psi_C \in C: \neg \psi_C \in A) \vee (\exists \psi_D \in D: \neg \psi_D \in B)$
\end{itemize}

By abuse of notation, we let $A\subseteq\hml$ also denote the formula $\bigwedge_{\varphi\in A}\varphi$.

\begin{lemma} Let $\varphi \in \hml$.
\label{lem:par}
\begin{center}
$p || q \models \varphi \iff \exists A,B \subseteq Sub(\varphi): (p \models A \wedge q \models B \wedge \varphi \in Par(A,B))$.
\end{center}
\end{lemma}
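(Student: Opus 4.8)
The plan is to prove the biconditional by a single structural induction on $\varphi$, establishing both directions at once. It is convenient to prove the right-to-left direction in a slightly stronger form, dropping the restriction that the witness sets lie in $Sub(\varphi)$, since this is what the inductive calls will actually deliver. Concretely, by induction on $\varphi$ I would prove:
\begin{itemize}
\item[(F)] $p||q \models \varphi \implies \exists A,B \subseteq Sub(\varphi): p\models A \wedge q\models B \wedge \varphi \in Par(A,B)$;
\item[(G)] for \emph{all} $A,B\subseteq\hml$: $(p\models A \wedge q\models B \wedge \varphi\in Par(A,B)) \implies p||q\models\varphi$.
\end{itemize}
The lemma is then (F) together with the instance of (G) where $A,B\subseteq Sub(\varphi)$. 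Here I read $p\models A$ as ``$p$ satisfies every formula of $A$'' (the abuse of notation in the paper), and I read the formula-argument in $Par(\varphi_A,B)$ as $Par$ applied to the set of top-level conjuncts of $\varphi_A$.

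Before the induction I would record the ingredients that keep the witnesses inside $Sub(\varphi)$: (i) $Sub$ is transitive ($\chi\in Sub(\psi)$ and $\psi\in Sub(\varphi)$ imply $\chi\in Sub(\varphi)$, since $Sub(\varphi)$ is closed under the contraction step); (ii) $\chi\in Sub(\psi)$ implies $\diam{a}\chi\in Sub(\diam{a}\psi)$ and $\neg\chi\in Sub(\neg\psi)$ (a single contraction of $\psi$ lifts through the outer $\diam{a}$ resp.\ $\neg$, then iterate); (iii) every sub-conjunction of $\bigwedge_{i\in I}\varphi_i$, in particular each conjunct $\varphi_i$, lies in $Sub(\bigwedge_{i\in I}\varphi_i)$, so $C\subseteq Sub(\psi)$ gives $\bigwedge C\in Sub(\psi)$ (reading $\bigwedge$ as flat/associative). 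I would also note that $Par$ is monotone in both arguments (each defining clause is), used in the conjunction case, and the operational fact that $p||q\transa r$ iff either $p\transa p'$ with $r=p'||q$ or $q\transa q'$ with $r=p||q'$ (from the left-merge rule and the rules for $+$).

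Direction (G) is the easy one. For $\true$ it is trivial; for $\bigwedge_i\varphi_i$ we have $\varphi_i\in Par(A,B)$ for each $i$, and the IH for $\varphi_i$ with the \emph{same} $A,B$ gives $p||q\models\varphi_i$; for $\diam{a}\psi$, some $\diam{a}\varphi_A\in A$ has $\psi\in Par(\varphi_A,B)$, so $p\models A$ yields $p\transa p'$ with $p'\models\varphi_A$, the IH for $\psi$ gives $p'||q\models\psi$, and $p||q\transa p'||q$ finishes; for $\neg\psi$ I argue by contradiction: if $p||q\models\psi$ then (F) for $\psi$ supplies $C,D\subseteq Sub(\psi)$ with $p\models C$, $q\models D$, $\psi\in Par(C,D)$, and the defining clause of $\neg\psi\in Par(A,B)$ applied to this $C,D$ forces some $\neg\psi_C\in A$ with $\psi_C\in C$ (or symmetrically on $B$), contradicting $p\models C$. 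For (F), the cases $\true$ ($A=B=\emptyset$), $\bigwedge_i\varphi_i$ (take unions of the IH witnesses and push them through by monotonicity of $Par$; they land in $\bigcup_i Sub(\varphi_i)\subseteq Sub(\bigwedge_i\varphi_i)$), and $\neg\psi$ (with $p||q\not\models\psi$, take $A=\{\neg\chi\mid \chi\in Sub(\psi),\,p\not\models\chi\}$ and likewise $B$; these sit in $Sub(\neg\psi)$ by (ii), are satisfied by $p$ resp.\ $q$, and $\neg\psi\in Par(A,B)$ since any $C,D\subseteq Sub(\psi)$ with $\psi\in Par(C,D)$, $p\models C$, $q\models D$ would give $p||q\models\psi$ by (G) for $\psi$) are all handled by choosing witnesses ``maximally''. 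The substantive case is the diamond: from $p\transa p'$ with $p'||q\models\psi$, the IH (F) for $\psi$ gives $C,D\subseteq Sub(\psi)$ with $p'\models C$, $q\models D$, $\psi\in Par(C,D)$; I set $B:=D$ and $A:=\{\diam{a}\bigwedge C\}$, note $p\models\diam{a}\bigwedge C$ (as $p\transa p'$ and $p'\models C$), observe $\diam{a}\bigwedge C\in Sub(\diam{a}\psi)$ by (iii) and (ii), and conclude $\diam{a}\psi\in Par(A,B)$ from the diamond clause, because the conjunct-set of $\bigwedge C$ is $C$ and $\psi\in Par(C,D)$.

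The main obstacle is exactly this forward diamond step: the witnesses produced for $p'$ and $\psi$ must be transported to witnesses for $p$ and $\diam{a}\psi$ that still lie in $Sub(\diam{a}\psi)$, which forces one to package the whole set $C$ as a single conjunction $\bigwedge C$ underneath the diamond; this relies on the closure of $Sub$ under sub-conjunctions and on reading the formula-argument of $Par$ as its set of top-level conjuncts. Pinning down these conventions and checking the $\true$/empty-set edge cases is the only delicate point; everything else is routine bookkeeping.
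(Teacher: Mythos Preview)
Your proposal is correct and follows essentially the same structural-induction argument as the paper, with the same witness constructions in each case (packaging $C$ as $\diam{a}\bigwedge C$ in the diamond case, taking unions in the conjunction case, and collecting negations of unsatisfied $Sub(\psi)$-formulas in the negation case). Your explicit strengthening of the backward direction to arbitrary $A,B\subseteq\hml$ (your (G)), together with the recorded monotonicity of $Par$ and the closure properties of $Sub$, in fact tightens details that the paper's proof leaves implicit---in particular, the paper's inductive call in the backward diamond case silently applies the hypothesis to witness sets that are not obviously contained in $Sub(\psi)$.
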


\begin{proof} 
 We use induction on the structure of formulas. The base case ($\varphi = \true$) is immediate. We proceed with the inductive step:
 
$"\implies"$: Assume that $p || q \models \varphi$. We prove that $\exists A,B \subseteq Sub(\varphi): (p \models A \wedge q \models B \wedge \varphi \in Par(A,B))$.

\begin{itemize}
\item $\varphi = \diam{a} \psi$: Without loss of generality, suppose $\lmerge{p}{q} \models \diam{a} \psi$ (the case $\lmerge{q}{p} \models \diam{a} \psi$ is symmetric), so $p \transa p' \wedge p' || q \models \psi$. From the inductive hypothesis we know that there are $A',B' \subseteq Sub(\psi)$ such that $p' \models A' \wedge q \models B' \wedge \psi \in Par(A',B')$. We take $A = \diam{a}(\bigwedge_{\varphi\in A'}\varphi)$ and $B =B'$.

\item $\varphi = \bigwedge_{i\in I} \varphi_i$: By the inductive hypothesis, for each $i\in I$ there are $A_i,B_i\subseteq Sub(\varphi_i)$ such that $p\models A_i \wedge q\models B_i \wedge \varphi_i\in Par(A_i,B_i)$. We can take $A = \bigcup_{i \in I} A_i$ and $B = \bigcup_{i \in I} B_i$.

\item $\varphi = \neg \psi$: We have:\\
$p ||q \models \neg \psi $ $\iff$ $ \neg (p ||q \models \psi)$\\
$\iff \neg (\exists C,D \subseteq Sub(\psi): (p \models C \wedge q \models D \wedge \psi \in Par(C,D)))$ (inductive hypothesis)\\
$\iff \forall C,D \subseteq Sub(\psi): \psi \in Par(C,D) 
(\exists \psi_C \in C: p \not \models \psi_C) \vee (\exists \psi_D \in D: q \not \models \psi_D)$.\\
We define:\\
$ A_p(\psi) = \bigcup_{C,D \subseteq Sub(\psi): \psi \in Par(C,D) } \{ \neg \psi_C ~|~ p \not \models \psi_C \wedge \psi_C \in C \} $\\
$ B_q(\psi) = \bigcup_{C,D \subseteq Sub(\psi): \psi \in Par(C,D) } \{ \neg \psi_D ~|~ q \not \models \psi_D \wedge \psi_D \in D \} $\\
These are the $A$ and $B$ we are looking for.
\end{itemize}

$"\Leftarrow"$: Suppose that $\exists A,B \subseteq Sub(\varphi): (p \models A, q \models B \wedge \varphi \in Par(A,B))$. We prove that $p||q \models \varphi$.
\begin{itemize}

\item $\varphi = \diam{a} \psi$: From $\diam{a} \psi \in Par(A,B)$ we have $( \exists \diam{a} \psi_A \in A: \psi \in Par(\psi_A,B))$
$\vee ( \exists \diam{a} \psi_B \in B: \psi \in Par(A,\psi_B))$. Without loss of generality suppose that $(\exists \diam{a} \psi_A \in A: \psi \in Par(\psi_A,B))$. Then $p \transa p': p' \models \psi_A$. From $\psi \in Par(\psi_A,B)$ and the inductive hypothesis we have $p' || q \models \psi$. Since $p || q \transa p' || q$, we finally obtain $p||q \models \diam{a} \psi$.

\item $\varphi = \bigwedge_{i\in I} \varphi_i$: According to the definition of $Par$ we have $\forall_{i \in I}: \varphi_i \in Par(A,B)$. The inductive hypothesis yields $\forall_{i \in I}: p||q \models \varphi_i$, and hence $p||q \models \bigwedge_{i\in I} \varphi_i$.

\item $\varphi = \neg \psi$:  We have $\forall C,D \subseteq Sub(\psi): \psi \in Par(C,D) ~ (\exists \psi_C \in C: \neg \psi_C \in A) \vee (\exists \psi_D \in D: \neg \psi_D \in B)$. Suppose, towards a contradiction, that $p||q \models \psi$. Then according to the inductive hypothesis there exist $C,D$ such that $p \models C$, $q \models D$ and $\psi \in Par(C,D)$. But from the earlier remark, we have either $\in C: \neg \psi_C \in A$ or $\psi_D \in D: \neg \psi_D \in B$. This contradicts the fact that $p \models A,C$ and $q \models B,D$.

\end{itemize}
\end{proof}

\begin{theorem} For any language $\hmo \subseteq \hml$ satisfying 
\begin{center}
(PAR)~~~~~$\varphi \in \hmo \implies Sub(\varphi) \subseteq \hmo^\equiv$,
\end{center}
 $\hmeq$ is a congruence with respect to parallel composition $||$.
\end{theorem}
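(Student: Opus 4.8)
The plan is to read off the theorem directly from Lemma~\ref{lem:par}, which already does the heavy lifting: it characterizes when $p||q \models \varphi$ purely in terms of the component processes, the set $Sub(\varphi)$, and the operator $Par$. The crucial point is that the witnessing sets $A,B \subseteq Sub(\varphi)$ produced by Lemma~\ref{lem:par} can be \emph{reused verbatim} when we pass from one pair of processes to an equivalent pair, because $\varphi \in Par(A,B)$ is a statement about $A$, $B$ and $\varphi$ only, with no reference to any process, and because $A$ and $B$ stay inside $Sub(\varphi)$ throughout.

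Concretely, I would assume $\hmo$ satisfies (PAR) and take $p_1 \eqhmo q_1$ and $p_2 \eqhmo q_2$. It suffices to show, for every $\varphi \in \hmo$, the implication $p_1 || p_2 \models \varphi \implies q_1 || q_2 \models \varphi$; the converse is obtained by exchanging the roles of the $p_i$ and $q_i$, and the two implications together give $p_1||p_2 \eqhmo q_1||q_2$. So suppose $p_1||p_2 \models \varphi$. By the ``$\implies$'' direction of Lemma~\ref{lem:par} there exist $A,B \subseteq Sub(\varphi)$ with $p_1 \models A$, $p_2 \models B$ and $\varphi \in Par(A,B)$.

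The only genuine step is transferring $p_1 \models A$ to $q_1 \models A$ (and $p_2 \models B$ to $q_2 \models B$). Since $\varphi \in \hmo$, condition (PAR) gives $Sub(\varphi) \subseteq \hmo^\equiv$, so every $\psi \in A$ is equivalent to some $\psi' \in \hmo$; then $p_1 \models \psi \iff p_1 \models \psi' \iff q_1 \models \psi' \iff q_1 \models \psi$, using $p_1 \eqhmo q_1$ and $\psi' \in \hmo$ in the middle step. As $p_1 \models A$ means $p_1 \models \psi$ for all $\psi \in A$, we obtain $q_1 \models A$, and symmetrically $q_2 \models B$. The sets $A,B$ and the fact $\varphi \in Par(A,B)$ are untouched and $A,B \subseteq Sub(\varphi)$ still holds, so the ``$\Leftarrow$'' direction of Lemma~\ref{lem:par} applied to $q_1$ and $q_2$ yields $q_1||q_2 \models \varphi$, as desired.

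I do not anticipate a real obstacle: with Lemma~\ref{lem:par} available, the argument is essentially bookkeeping. The one point requiring care is the interaction between the abbreviation ``$p \models A$'' (satisfaction of $\bigwedge_{\psi\in A}\psi$) and the passage through $\hmo^\equiv$: one must argue formula-by-formula over the members of $A$, rather than treating $\bigwedge_{\psi\in A}\psi$ as a single element of $\hmo^\equiv$, since $\hmo$ need not be closed under conjunction.
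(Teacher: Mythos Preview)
Your proposal is correct and follows essentially the same route as the paper: apply the forward direction of Lemma~\ref{lem:par} to obtain witnessing sets $A,B\subseteq Sub(\varphi)$, use (PAR) to place these inside $\hmo^\equiv$ and hence transfer satisfaction from the $p_i$ to the $q_i$, then apply the backward direction of Lemma~\ref{lem:par}. Your formula-by-formula justification of the transfer step and your remark about not treating $\bigwedge_{\psi\in A}\psi$ as a single element of $\hmo^\equiv$ make explicit a detail the paper leaves implicit, but the argument is otherwise the same.
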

\begin{proof}
Suppose $p_1 \hmeq q_1$ and $p_2 \hmeq q_2$. Suppose that $p_1 || p_2 \models \varphi \in \hmo$. According to Lem.~\ref{lem:par}, there exist $ A,B \subseteq Sub(\varphi)$ such that $p_1 \models A, p_2 \models B$ and $\varphi \in Par(A,B)$. Since $\varphi \in \hmo$, by condition PAR, $A,B\subseteq\hmo^\equiv$. Since $p_1 \hmeq q_1$ and $p_2 \hmeq q_2$, it follows that $q_1 \models A$ and $q_2 \models B$. According to Lem.~\ref{lem:par} this implies $q_1 || q_2 \models \varphi$.

\end{proof}

As an example, if we want to define a modal language that would be a congruence with respect to parallel composition, which includes behaviour described by a formula $\diam{a}(\neg \diam{b} \true \wedge \diam{c} \diam{d}\true)$, we should include the following formulas in the characterization (we omit irrelevant formulas like $\diam{a}\neg \true$): 
$\diam{a}\true$,
$\diam{a}\neg \diam{b} \true$,
$\diam{a}\diam{c} \true$,
$\diam{a}\diam{c} \diam{d} \true$,
$\diam{a}\diam{d} \true$,
$\diam{a}(\neg \diam{b} \true \wedge \diam{c}\true)$
$\diam{a}(\neg \diam{b} \true \wedge \diam{d}\true)$.

All basic equivalences except for completed trace have modal characterizations that satisfy the condition PAR. We note that parallel composition is compositional with respect to completed trace equivalence.

\section{Conclusions and future work}

We have presented, for a number of process operators from the literature, general conditions that guarantee congruence of process equivalences defined by means of a modal characterization. To the best of our knowledge it is the first such attempt.

Our conditions are sufficient, but by no means necessary. We believe that it is difficult (if not impossible) to provide a syntactic restriction on a modal language that would characterize the class of congruences for a given operator (strictly speaking, languages that induce congruences). We aimed at clear and comprehensible rather than slightly relaxed but more complicated conditions.

As the next step, we would like to investigate other process operators (e.g.\ sequential composition, renaming, merge with communication), consider the setting of weak semantics and different modal languages. In the last case, if we consider e.g.\ $\hml$ with recursion or the $\mu$-calculus, we may attempt to combine our work with existing results on characteristic formulas \cite{AcInSa09}. In that setting, instead of modal language properties, we could focus on compositionality of single formulas.


\begin{thebibliography}{11}

\bibitem{AcFoVe01}
{L. Aceto, W.J. Fokkink \& C. Verhoef} (2001):
\newblock {\em Structural operational semantics.}
In (J.A. Bergstra, A. Ponse and S.A. Smolka, eds)
\newblock {\sl Handbook of Process Algebra}, Elsevier, pp.\ 197--292.

\bibitem{AcInSa09}
{L. Aceto, A. Ingolfsdottir \& J. Sack} (2009):
\newblock {\em Characteristic Formulae for Fixed-Point Semantics: A General Framework.}
In (D. Gorla and S. Fr\"oschle, eds)  \newblock {\sl Proc. EXPRESS'09}, EPTCS 8, pp.\ 1--15.

\bibitem{BlRiVe01}
{P. Blackburn, M. de Rijke \& Y. Venema} (2001):
\newblock {\em Modal Logic.}
\newblock Cambridge University Press.

\bibitem{BlFoGl04}
{B. Bloom, W.J. Fokkink \& R.J. van Glabbeek} (2004):
\newblock {\em Precongruence formats for decorated trace semantics.}
\newblock {\sl ACM Transactions on Computational Logic} 5(1), pp.\ 26--78.

\bibitem{FoGlWi06}
{W.J. Fokkink, R.J. van Glabbeek \& P. de Wind} (2006):
\newblock {\em Compositionality of Hennessy-Milner logic by structural operational semantics.}
\newblock {\sl Theoretical Computer Science}, 354(3), pp.\ 421--440. 

\bibitem{Gla01}
{R.J.~van Glabbeek} (2001):
\newblock {\em The linear time -- branching time spectrum {I}; the semantics of concrete, sequential processes.}
\newblock In J.A. Bergstra, A.~Ponse \& S.A. Smolka, editors: {\sl Handbook of Process Algebra}, Elsevier, pp.\ 3--99.

\bibitem{HeMi85}
{M. Hennessy \& R. Milner} (1985):
\newblock {\em Algebraic laws for non-determinism and concurrency.}
\newblock {\sl Journal of the ACM} 32(1), pp.\ 137--161.

\bibitem{Ver95}
{C. Verhoef} (1995):
\newblock {\em A congruence theorem for structured operational semantics with
predicates and negative premises.}
\newblock {\sl Nordic Journal of Computing} 2, pp.\ 274--302.

\end{thebibliography}
\end{document}